\theoremstyle{plain}
\newtheorem{theorem}{Theorem}[section]
\newtheorem{lemma}[theorem]{Lemma}
\newtheorem{cor}[theorem]{Corollary}
\newtheorem{conjecture}[theorem]{Conjecture}
\theoremstyle{definition}
\numberwithin{equation}{section}
\begin{document}

\title[Lower Bounds for the Area of Black Holes] {Lower Bounds for the Area of Black Holes in Terms of Mass, Charge, and Angular Momentum}

\author[Dain]{Sergio Dain}
\address{Facultad de Matem\'{a}tica, Astronom\'{\i}a y F\'{\i}sica, FaMAF \\
Universidad Nacional de C\'{o}rdoba\\
Instituto de F\'{\i}sica Enrique Gaviola, IFEG, CONICET\\
Ciudad Universitaria (5000) C\'{o}rdoba, Argentina}
\email{dain@famaf.unc.edu.ar}

\author[Khuri]{Marcus Khuri}
\address{Department of Mathematics\\
Stony Brook University\\
Stony Brook, NY 11794, USA}
\email{khuri@math.sunysb.edu}

\author[Weinstein]{Gilbert Weinstein}
\address{Department of Physics; Department of Computer Science and Mathematics\\
Ariel University of Samaria\\
Ariel 40700, Israel}
\email{gilbert.weinstein@gmail.com}

\author[Yamada]{Sumio Yamada}
\address{Department of Mathematics\\
Gakushuin University\\
Tokyo 171-8588, Japan}
\email{yamada@math.gakushuin.ac.jp}

\thanks{The first author acknowledges the support of grant PICT-2010-1387 of CONICET (Argentina) and
grant Secyt-UNC (Argentina). The second author acknowledges the support of
NSF Grant DMS-1007156. The fourth author acknowledges the support of JSPS KAKENHI Grants 23654061 \& 24340009.}
\thanks{PACS numbers: 04.70.Bw, 04.20.Dw, 04.20.Ex}

\begin{abstract}
The most general formulation of Penrose's inequality yields a lower bound for ADM mass in terms of the area,
charge, and angular momentum of black holes.
This inequality is in turn equivalent to an upper and lower bound for the area
in terms of the remaining quantities. In this note, we establish the lower bound
for a single black hole in the setting of axisymmetric maximal initial data sets
for the Einstein-Maxwell equations, when the non-electromagnetic matter fields are
not charged and satisfy the
dominant energy condition. It is shown that the inequality is saturated if and only
if the initial data arise from the extreme Kerr-Newman spacetime. Further refinements
are given when either charge or angular momentum vanish. Lastly, we discuss the validity
of the lower bound in the presence of multiple black holes.
\end{abstract}
\maketitle

\section{Introduction}
\label{sec1} \setcounter{equation}{0}
\setcounter{section}{1}

The standard model of gravitational collapse \cite{Choquet-Bruhat}, \cite{ChruscielGallowayPollack} consists of two
main parts. Namely, gravitational collapse should always result in a black hole (weak cosmic censorship), and
eventually the spacetime should settle down to a stationary electro-vacuum final state. According to the black
hole uniqueness theorem this final state will be a Kerr-Newman spacetime; however it
should be noted that many important technical aspects of black hole uniqueness remain open (see \cite{ChruscielCosta0}
for a recent review). Let $(m_{0},A_{0},q_{0},J_{0})$ denote the mass, black hole area, charge, and angular
momentum of the Kerr-Newman solution, then direct calculation yields an expression for the mass in terms of the remaining quantities
\begin{equation}\label{1}
m_{0}^{2}=\frac{A_{0}}{16\pi}+\frac{q_{0}^2}{2}+\frac{\pi(q_{0}^{4}+4J_{0}^{2})}{A_{0}}.
\end{equation}
In general, the expression on the right-hand side is the square of the so called Christodoulou mass \cite{Christodoulou} of
a black hole. Observe that, as a function of $A_{0}$ (keeping $q_{0}$ and $J_{0}$ fixed), the right-hand
side is nondecreasing precisely when
\begin{equation}\label{2}
A_{0}\geq 4\pi\sqrt{q_{0}^{4}+4J_{0}^{2}}.
\end{equation}

Consider now a Cauchy surface $M$ in an asymptotically flat spacetime which undergoes gravitational collapse,
and settles down to the Kerr-Newman solution above. Let $(m,A,q,J)$ be the ADM mass, black hole area, total charge,
and ADM angular momentum associated with this slice. Since gravitational waves carry positive energy, the ADM (total)
mass of the spacetime should not be smaller than the mass of the final state
\begin{equation}\label{3}
m\geq m_{0},
\end{equation}
and $m-m_{0}$ should measure the amount of radiation emitted by the system. Moreover, the Hawking area theorem
\cite{Hawking} (which relies on cosmic censorship) yields
\begin{equation}\label{4}
A_{0}\geq A.
\end{equation}
Therefore, if conditions are imposed to ensure that the charge and angular momentum are conserved, that is
$q=q_{0}$ and $J=J_{0}$, then we are lead to the following generalized version of the Penrose inequality \cite{Penrose}
\begin{equation}\label{5}
m^{2}\geq\frac{A}{16\pi}+\frac{q^{2}}{2}+\frac{\pi(q^{4}+4J^{2})}{A}\text{ }\text{ }\text{ }\text{ whenever }\text{ }\text{ }
\text{ }A\geq 4\pi\sqrt{q^{4}+4J^{2}}.
\end{equation}
Typical assumptions which guarantee conserved charge and angular momentum are that the spacetime be electro-vacuum
and axially symmetric. Furthermore it is expected that equality is achieved in the first inequality of \eqref{5},
only if $M$ arises from the Kerr-Newman spacetime.

In the case that the area-charge-angular momentum inequality of \eqref{2} is not satisfied, similar arguments motivate
the inequality
\begin{equation}\label{6}
m^2\geq\frac{q^{2}+\sqrt{q^{4}+4J^{2}}}{2}.
\end{equation}
Notice that the right-hand side of \eqref{6} is the minimum value of the right-hand side of \eqref{5}, as a function of $A$, for $A\geq 4\pi\sqrt{q^{4}+4J^{2}}$.
Equality in \eqref{6} should hold only when $M$ arises from the extreme Kerr-Newman spacetime. These two inequalities
yield variational characterizations of the Kerr-Newman and extreme Kerr-Newman initial data, as those with minimal mass
for fixed black hole area, total charge, and angular momentum, or fixed total charge and angular momentum. A violation
of \eqref{5} or \eqref{6} would present a serious challenge to the standard picture of gravitational collapse mentioned above.

The area $A$ appearing in \eqref{5} is that of the intersection of the event horizon with the Cauchy surface $M$. Unfortunately,
from an initial data perspective this is not a useful quantity, since it requires the full evolution
of spacetime in order to determine its value. Thus it is convenient to replace event horizon with apparent horizon, and to
replace $A$ with $A_{min}$, the minimal area required to enclose the outermost apparent horizon or the minimal area required
to enclose all but one asymptotic end.  We may now give the Penrose inequality a purely initial data formulation
\begin{equation}\label{7}
m^{2}\geq\frac{A_{min}}{16\pi}+\frac{q^{2}}{2}+\frac{\pi(q^{4}+4J^{2})}{A_{min}}\text{ }\text{ }\text{ }\text{ whenever }
\text{ }\text{ }\text{ }A_{min}\geq 4\pi\sqrt{q^{4}+4J^{2}}.
\end{equation}
Cosmic censorship implies that the outermost apparent horizon is contained within the event horizon, so that $A\geq A_{min}$
and hence \eqref{7} is implied by \eqref{5}. It follows that a counterexample to \eqref{7} would
be just as significant as a counterexample to \eqref{5} or \eqref{6}.

In this paper we will prove `one half' of inequality \eqref{7} for a single component black hole, under the assumption of maximal
initial data. In order to explain what is meant by `one half', let us multiply
the inequality by $A_{min}$ and view it as a bound for a quadratic polynomial in $A_{min}$. This is then equivalent to the
following upper and lower bound for $A_{min}$
\begin{equation}\label{8}
m^2-\frac{q^2}{2}-\sqrt{\left(m^2-\frac{q^2}{2}\right)^2-\frac{q^4}{4}-J^2}\leq\frac{A_{min}}{8\pi}
\leq m^2-\frac{q^2}{2}+\sqrt{\left(m^2-\frac{q^2}{2}\right)^2-\frac{q^4}{4}-J^2}.
\end{equation}
Notice that the quantity inside the square root is nonnegative by \eqref{6}. We may then state two conjectured inequalities
which are motivated by cosmic censorship
\begin{equation}\label{9}
\frac{A_{min}}{8\pi}\geq m^2-\frac{q^2}{2}-\sqrt{\left(m^2-\frac{q^2}{2}\right)^2-\frac{q^4}{4}-J^2}\text{ }\text{ }\text{ }
\text{ whenever }\text{ }\text{ }\text{ }A_{min}\geq 4\pi\sqrt{q^{4}+4J^{2}},
\end{equation}
and
\begin{equation}\label{10}
\frac{A_{min}}{8\pi}
\leq m^2-\frac{q^2}{2}+\sqrt{\left(m^2-\frac{q^2}{2}\right)^2-\frac{q^4}{4}-J^2}.
\end{equation}
The auxiliary area-charge-angular momentum inequality does not appear with the upper bound \eqref{10} for a reason. Namely,
one may derive this inequality directly from the same type of heuristic arguments which lead to \eqref{7}. In fact, inequality
\eqref{10} is the usual form of the generalized Penrose inequality that appears in the literature \cite{Dain}, \cite{Penrose1}. For maximal initial
data, this upper bound has been proven when $q=J=0$ for a single black hole in \cite{HuiskenIlmanen} and for multiple black holes
in \cite{Bray}; it has also been established in the case $q\neq 0$, $J=0$ for a single black hole \cite{HuiskenIlmanen}, \cite{Jang}
(see \cite{DisconziKhuri} for the case of equality).  The case of equality in \eqref{9} and \eqref{10} should only be achieved by
extreme Kerr-Newman and Kerr-Newman initial data, respectively. The `half' of \eqref{7} which will be established here is the lower
bound \eqref{9}, when the horizon is connected, and without the assumption of the auxiliary area-charge-angular momentum inequality.
In fact when the horizon is connected, the area-charge-angular momentum inequality is a theorem itself, rather than a hypothesis. The
case when $q\neq 0$ and $J=0$, where the hypothesis of axial symmetry is not required, was established in \cite{KhuriWeinsteinYamada}.

\section{Statement and Proof of the Main Result}
\label{sec2} \setcounter{equation}{0}
\setcounter{section}{2}

We begin with the appropriate definitions. Let $(M, g, k, E, B)$ be an initial data set for the Einstein-Maxwell equations, consisting
of a 3-manifold $M$, Riemannian metric $g$, symmetric 2-tensor $k$ (representing the
extrinsic curvature in spacetime), and vector fields $E$ and $B$ representing the electric and magnetic fields. It is assumed that
there is no charged matter, that is
\begin{equation}\label{10'}
\operatorname{div} E=0, \qquad \operatorname{div} B=0.
\end{equation}
Consider a manifold $M$ that has at least two ends, with one being asymptotically flat, and the remainder being
either asymptotically flat or asymptotically cylindrical. Recall that a domain $M_{\text{end}}\subset M$ is an
asymptotically flat end if it
is diffeomorphic to $\mathbb{R}^{3}\setminus\text{Ball}$, and in the coordinates given by the asymptotic
diffeomorphism the following fall-off conditions hold
\begin{equation}\label{11}
g_{ij}=\delta_{ij}+o_{l}(r^{-1/2}),\text{ }\text{ }\text{ }\text{ }\partial g_{ij}\in L^{2}(M_{\text{end}}),\text{
}\text{ }\text{ }
\text{ }k_{ij}=O_{l-1}(r^{-3}),
\end{equation}
\begin{equation}\label{12}
E^{i}=O_{l-1}(r^{-2}),\text{ }\text{ }\text{ }\text{ }\text{ }B^{i}=O_{l-1}(r^{-2}),
\end{equation}
for some $l\geq 6$\footnote{The notation $f=o_{l}(r^{-\alpha})$ asserts that $\lim_{r\rightarrow\infty}r^{\alpha+n}\partial^{n}f=0$
for all $n\leq l$, and
$f=O_{l}(r^{-\alpha})$ asserts that $r^{\alpha+n}|\partial^{n}f|\leq C$ for all $n\leq l$. The assumption $l\geq 6$ is needed for
the results in \cite{Chrusciel}.}. These asymptotics may be weakened, see for example \cite{ChruscielCosta}, \cite{Costa}, and \cite{SchoenZhou}.
If $M$ is simply connected and the data are axially symmetric, it is shown in \cite{Chrusciel} that the analysis reduces to the study
of manifolds of the form
$M\simeq\mathbb{R}^{3}\setminus\sum_{n=0}^{N}i_{n}$, where $i_{n}$ are points in $\mathbb{R}^{3}$ and represent asymptotic ends
(in total there are $N+1$ ends). Moreover there exists a global (cylindrical) Brill coordinate system on $M$, where the points $i_{n}$ all lie on the $z$-axis, and in which the
appropriate asymptotics for the metric coefficients near a cylindrical end are given in \cite{Dain0}. The fall-off conditions in
the asymptotically flat ends guarantee that the asymptotic limits defining the ADM mass, and total charge exist
\begin{equation}\label{13}
m=\frac{1}{16\pi}\int_{S_{\infty}}(g_{ij,i}-g_{ii,j})\nu^{j},
\end{equation}
\begin{equation}\label{14}
q_{e} = \frac1{4\pi} \int_{S_\infty} E_i \nu^i\, , \qquad
q_{b} = \frac1{4\pi} \int_{S_\infty} B_i \nu^i\, ,
\end{equation}
where $S_{\infty}$ indicates the limit as $r\rightarrow\infty$ of integrals over coordinate spheres $S_{r}$, with unit outer
normal $\nu$. Here $q_{e}$ and $q_{b}$ denote the total electric and magnetic charge, respectively,
and we denote the square of the total charge by $q^{2}=q_{e}^{2}+q_{b}^{2}$.

We say that the initial data are axially symmetric if the group of isometries of the Riemannian manifold $(M,g)$ has a subgroup
isomorphic to $U(1)$, and that the remaining quantities defining
the initial data are invariant under the $U(1)$ action. In particular, if $\eta$ denotes the Killing field associated with this
symmetry, then
\begin{equation}\label{15}
\mathfrak{L}_{\eta}g=\mathfrak{L}_{\eta}k=\mathfrak{L}_{\eta}E=\mathfrak{L}_{\eta}B=0,
\end{equation}
where $\mathfrak{L}_{\eta}$ denotes Lie differentiation.
The (gravitational) angular momentum, in the direction of the axis of rotation, of a 2-surface $\Sigma\subset M$ whose tangent space includes $\eta$, is represented by a scalar \cite{Dain}
\begin{equation}\label{16}
J(\Sigma)=\frac{1}{8\pi}\int_{\Sigma}(k_{ij}-(\operatorname{Tr} k)g_{ij})\nu^{i}\eta^{j}.
\end{equation}
The ADM angular momentum is then given by
\begin{equation}\label{17}
J=\lim_{r\rightarrow\infty}J(S_{r}),
\end{equation}
and note that the fall-off conditions \eqref{11} guarantee that the limit exists, as $|\eta|$ grows like $\rho$, the radial
coordinate in the (cylindrical) Brill coordinate system.

Unfortunately, the angular momentum \eqref{16} is not necessarily conserved. We are thus motivated to define an alternate
angular momentum which has this property. In order to do this, we will make use of a vector potential $B=\nabla\times\vec{A}$. However, note that
the topology of $M$ does not allow for a globally defined (smooth) vector potential. The typical construction which avoids this difficulty involves
removing a `Dirac string' associated with each point $i_n$. That is, removing from $M$ either the portion of the $z$-axis below or above $i_{n}$, to obtain
a ($U(1)$ invariant) potential $\vec{A}_{\pm}^{n}$, defined on the complement of the respective Dirac string. We then define
\begin{equation}\label{17.1}
\vec{A}=\frac{1}{2N}\sum_{n=1}^{N}(\vec{A}_{+}^{n}+\vec{A}_{-}^{n})\text{ }\text{ }\text{ }\text{ on }\text{ }\text{ }\text{ }\mathbb{R}^{3}\setminus\{z-\text{axis}\}.
\end{equation}
The (total) angular momentum of a surface $\Sigma$, after the contribution
of the electromagnetic field has been added, is given by
\begin{equation}\label{18}
\widetilde{J}(\Sigma)=\frac{1}{8\pi}\int_{\Sigma}(k_{ij}-(\operatorname{Tr} k)g_{ij})\nu^{i}\eta^{j}+
\frac{1}{4\pi}\int_{\Sigma}(E_i \nu^i)(\vec{A}_{j}\eta^{j}).
\end{equation}
Although $\vec{A}$ is discontinuous on the $z$-axis, the product $\vec{A}_{j}\eta^{j}$ remains well behaved since $\eta$ vanishes on the $z$-axis.
Below it will be shown that this angular momentum is gauge invariant with respect to gauge transformations which vanish in the black hole region,
and is conserved under appropriate conditions on (linear) momentum density.
Recall that the matter density and (linear) momentum density for the non-electromagnetic matter fields are given by the constraint equations
\begin{align}\label{19}
\begin{split}
16\pi \mu  & = R + (\operatorname{Tr} k)^2 - |k|^2 - 2(|E|^2+|B|^{2}), \\
8\pi P & = \operatorname{div} (k - (\operatorname{Tr} k)g)+2E\times B,
\end{split}
\end{align}
where $R$ denotes the scalar curvature of $g$. The non-electromagnetic matter fields will be said to satisfy the dominant
energy condition if
\begin{equation}\label{20}
\mu \geq |P|.
\end{equation}

\begin{lemma}\label{lemma1}
Let $(M,g,k,E,B)$ have the properties described above. If $P_{i}\eta^{i}=0$, then $\widetilde{J}$ is conserved. That is, if
surfaces $\Sigma_{1}$ and $\Sigma_{2}$ are $U(1)$ invariant and bound a domain, then
\begin{equation}\label{21}
\widetilde{J}(\Sigma_{1})=\widetilde{J}(\Sigma_{2}).
\end{equation}
Moreover, $\widetilde{J}$ is invariant under gauge transformations which vanish in a
neighborhood of the points $i_{n}$, and
\begin{equation}\label{22}
\widetilde{J}(S_{\infty})=J.
\end{equation}
\end{lemma}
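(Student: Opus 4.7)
The plan is to prove the three assertions—conservation, gauge invariance, and agreement with $J$ at infinity—in sequence. For the conservation statement, I introduce the vector field $X^i := (k^{ij} - (\operatorname{Tr} k)g^{ij})\eta_j + 2E^i(\vec{A}_j\eta^j)$ so that $\widetilde{J}(\Sigma) = \frac{1}{8\pi}\int_\Sigma X^i\nu_i$, and show $\operatorname{div} X = 8\pi P_j\eta^j$, which vanishes by hypothesis. The divergence of the first summand is handled by the momentum constraint \eqref{19} together with the Killing equation $\nabla_{(i}\eta_{j)}=0$, which annihilates the contraction of the symmetric tensor $k - (\operatorname{Tr} k)g$ with $\nabla_i\eta_j$, producing $8\pi P_j\eta^j - 2(E\times B)_j\eta^j$. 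For the second summand, the no-charge condition $\operatorname{div} E = 0$ in \eqref{10'} reduces the divergence to $2E^i\nabla_i(\vec{A}_j\eta^j)$; since each $\vec{A}^n_\pm$ is chosen $U(1)$-invariant, so is the average $\vec{A}$, giving $\mathfrak{L}_\eta\vec{A}=0$. Combining this with the Killing equation yields $E^i\vec{A}_j\nabla_i\eta^j = -E^i\eta^j\nabla_j\vec{A}_i$, so that $E^i\nabla_i(\vec{A}_j\eta^j) = \eta^j E^i(\nabla_i\vec{A}_j - \nabla_j\vec{A}_i)$, which via $B = \nabla\times\vec{A}$ equals $(E\times B)_j\eta^j$ with precisely the sign needed to cancel the corresponding term from the first summand. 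Stokes' theorem on the region between $\Sigma_1$ and $\Sigma_2$, after an exhaustion argument excising a tubular neighborhood of the $z$-axis (where $\vec{A}$ is singular but $\vec{A}_j\eta^j$ remains bounded since $\eta\to 0$ on the axis), then gives $\widetilde{J}(\Sigma_1) = \widetilde{J}(\Sigma_2)$.

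For gauge invariance, under $\vec{A}\to\vec{A}+d\chi$ the change in $\widetilde{J}(\Sigma)$ equals $\frac{1}{4\pi}\int_\Sigma (E_i\nu^i)(\eta^j\partial_j\chi)\, d\sigma$. Using the $U(1)$-invariance of $\Sigma$ and the axial symmetry of $E$, $g$, $\nu$, I parametrize $\Sigma$ by its $\eta$-orbits: the factor $E_i\nu^i$ then depends only on the transverse coordinate, while $\eta^j\partial_j\chi = \partial_\phi\chi$, and so Fubini combined with the periodicity $\int_0^{2\pi}\partial_\phi\chi\, d\phi = 0$ forces the integral to vanish. The hypothesis that $\chi$ vanish in a neighborhood of each $i_n$ ensures the requisite global regularity of $\chi$ on $M$ for this reduction.

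The asymptotic identity $\widetilde{J}(S_\infty) = J$ follows by splitting \eqref{18}: the first term gives $J$ by \eqref{16}--\eqref{17}, while for $\frac{1}{4\pi}\int_{S_r}(E_i\nu^i)(\vec{A}_j\eta^j)$ I use the explicit form of the symmetrically averaged potential in the distinguished end, for which $\vec{A}_j\eta^j\to -q_b\cos\theta$ at leading order (the Dirac-string contributions cancel by the symmetric averaging) while $E_i\nu^i \sim q_e/r^2$. The leading integrand is thus proportional to $\cos\theta\sin\theta$, whose integral over $[0,\pi]$ vanishes, and the subleading terms decay by \eqref{11}--\eqref{12}. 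The main technical obstacles will be the sign bookkeeping in the divergence computation—verifying that the $(E\times B)\cdot\eta$ contributions from the momentum constraint and from the electromagnetic correction cancel exactly under the chosen conventions for $F_{ij}$ and $\mathfrak{L}_\eta\vec{A}$—and the rigorous application of Stokes' theorem in the presence of the $z$-axis singularity of $\vec{A}$, which requires an exhaustion of the bulk domain by sets avoiding the axis together with uniform control of $\vec{A}_j\eta^j$ near it.
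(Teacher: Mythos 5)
Your proposal is correct and follows essentially the same route as the paper: the conservation claim is the same divergence--plus--Stokes argument (the paper rewrites $(E\times B)_i\eta^i$ as a total divergence plus a $\operatorname{div}E$ term using the Killing equation and the $U(1)$ invariance of the fields, which is exactly your computation of $\operatorname{div}X$ given the constraint \eqref{19}, modulo the sign/convention bookkeeping you yourself flag and which the paper's own \eqref{23}--\eqref{25} also treat loosely), and the identity $\widetilde{J}(S_\infty)=J$ is verified by the identical cancellation of the averaged Dirac-string potentials, whose leading term proportional to $\cos\theta$ integrates to zero over large spheres. The only minor variation is the gauge-invariance step, where you integrate $\eta^j\partial_j\chi$ over the $U(1)$ orbits directly rather than appealing to the bulk identity \eqref{25}; both arguments work, and your explicit exhaustion near the $z$-axis makes precise a point the paper leaves implicit.
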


\begin{proof}
Consider the linear momentum density of the electromagnetic field in the $\eta$-direction
\begin{align}\label{23}
\begin{split}
(E\times B)_{l}\eta^{l}&=\epsilon_{ijl}E^{i}\epsilon^{abj}(\nabla_{a}\vec{A}_{b})\eta^{l}\\
&=\nabla_{a}(\epsilon_{ijl}E^{i}\epsilon^{abj}\vec{A}_{b}\eta^{l})
-\epsilon_{ijl}\epsilon^{abj}\vec{A}_{b}(\nabla_{a}E^{i})\eta^{l}
-\epsilon_{ijl}\epsilon^{abj}\vec{A}_{b}E^{i}\nabla_{a}\eta^{l}\\
&=\nabla_{a}(\epsilon_{ijl}E^{i}\epsilon^{abj}\vec{A}_{b}\eta^{l})+4\vec{A}_{i}\eta^{i}\nabla_{j}E^{j}.
\end{split}
\end{align}
In this calculation we used the fact that $g$ and $E$ are invariant under the $U(1)$ action
\begin{equation}\label{24}
0=(\mathfrak{L}_{\eta}g)_{ij}=\nabla_{i}\eta_{j}+\nabla_{j}\eta_{i}, \qquad 0=(\mathfrak{L}_{\eta}E)^{i}
=\eta^{j}\nabla_{j}E^{i}-E^{j}\nabla_{j}\eta^{i}.
\end{equation}
If $D\subset M$ is the domain with boundary $\partial D=\Sigma_{1}\cup\Sigma_{2}$ then
\begin{equation}\label{25}
\int_{D}E\times B\cdot\eta=\int_{D}4(\vec{A}\cdot\eta) \operatorname{div} E+\int_{\partial D}(\vec{A}\cdot\eta) (E\cdot\nu),
\end{equation}
where in the calculation of the boundary term $\eta\perp\nu$ is used. Therefore, since $\operatorname{div} E=0$
\begin{align}\label{26}
\begin{split}
0=\int_{D}P_{i}\eta^{i}&=\frac{1}{8\pi}\int_{D}\operatorname{div}(k-(\operatorname{Tr} k)g)_{i}\eta^{i}+
\frac{1}{4\pi}\int_{D}(E\times B)_{i}\eta^{i}\\
&=\frac{1}{8\pi}\int_{\partial D}(k_{ij}-(\operatorname{Tr} k)g_{ij})\eta^{i}\nu^{j}+
\frac{1}{4\pi}\int_{\partial D}(E_i \nu^i)(\vec{A}_{j}\eta^{j})\\
&=\widetilde{J}(\Sigma_{1})-\widetilde{J}(\Sigma_{2}).
\end{split}
\end{align}

To show that $\widetilde{J}$ is gauge invariant, consider a gauge transformation $\vec{A}\mapsto \vec{A}+\nabla u$ in which $u$ vanishes near the points $i_n$.
According to \eqref{25} and $\operatorname{div}E=0$, the definition \eqref{18} may be expressed in terms of quantities independent of $u$, as the relevant boundary
integral near $i_n$, and involving $u$, vanishes. Note also that the restriction to gauge transformations vanishing near $i_n$ is physically relevant, since that
restriction is confined within the black hole.

In order to prove \eqref{22}, it suffices to show that
\begin{equation}\label{27}
\lim_{r\rightarrow\infty}\int_{S_{r}}(E_{i} \nu^{i})(\vec{A}_{j}\eta^{j})=0.
\end{equation}
In light of \eqref{12}, we find that $|\vec{A}|=O(r^{-1})$. Moreover $|\eta|\sim|x\partial_y-y\partial_x|=O(\rho)$, where $\rho\sim\sqrt{x^2+y^2}$ is
the radial coordinate in the (cylindrical) Brill coordinate system. Thus, the expansion
\begin{equation}\label{27.1}
E=\frac{q_e}{r^{2}}\partial_{r}+O\left(\frac{1}{r^{3}}\right)
\end{equation}
yields
\begin{equation}\label{27.2}
\lim_{r\rightarrow\infty}\int_{S_{r}}(E_{i} \nu^{i})(\vec{A}_{j}\eta^{j})=\lim_{r\rightarrow\infty}\frac{q_{e}}{r^{2}}\int_{S_{r}}\vec{A}_{j}\eta^{j}.
\end{equation}
Suppose that $i_0$ is situated at the origin. Then since
\begin{equation}\label{27.3}
B=\frac{q_b}{r^{2}}\partial_{r}+O\left(\frac{1}{r^{3}}\right),
\end{equation}
we have
\begin{equation}\label{27.4}
\vec{A}_{+}^{0}=\frac{q_b}{r(z+r)}(x\partial_{y}-y\partial_{x})+O\left(\frac{1}{r^{2}}\right)\text{ }\text{ }\text{ }\text{ on }\text{ }\text{ }\text{ }
\mathbb{R}^{3}\setminus\{(0,0,z)\mid z\leq 0\},
\end{equation}
and
\begin{equation}\label{27.4}
\vec{A}_{-}^{0}=\frac{q_b}{r(z-r)}(x\partial_{y}-y\partial_{x})+O\left(\frac{1}{r^{2}}\right)\text{ }\text{ }\text{ }\text{ on }\text{ }\text{ }\text{ }
\mathbb{R}^{3}\setminus\{(0,0,z)\mid z\geq 0\}.
\end{equation}
Thus
\begin{equation}\label{27.5}
\vec{A}_{+}^{0}\cdot\eta=\frac{q_{b}(x^{2}+y^{2})}{r(z+r)}+O\left(\frac{1}{r}\right)=q_{b}\left(1-\frac{z}{r}\right)+O\left(\frac{1}{r}\right),
\end{equation}
\begin{equation}\label{27.5}
\vec{A}_{-}^{0}\cdot\eta=\frac{q_{b}(x^{2}+y^{2})}{r(z-r)}+O\left(\frac{1}{r}\right)=-q_{b}\left(1+\frac{z}{r}\right)+O\left(\frac{1}{r}\right),
\end{equation}
and it follows that
\begin{equation}\label{28}
\lim_{r\rightarrow\infty}\frac{1}{r^{2}}\int_{S_{r}}(\vec{A}_{+}^{0}+\vec{A}_{-}^{0})\cdot\eta=0.
\end{equation}
Similarly, this type of cancelation occurs for the pair of potentials associated with each $i_{n}$. The desired result \eqref{27} now follows from \eqref{27.2}
\end{proof}

In order to establish \eqref{9}, we require the global inequality \eqref{6}, which relies on the existence of a twist potential $\omega$:
\begin{equation}\label{29}
\epsilon_{ijl}(\pi^{jn}+2\chi^{jn})\eta^{l}\eta_{n}dx^{i}=d\omega
\end{equation}
where
\begin{equation}\label{29.1}
\pi_{jn}=k_{jn}-(\operatorname{Tr}k)g_{jn}, \qquad  \chi_{jn}=\epsilon_{imn}E^{i}\epsilon_{j}^{\text{ }\!\text{ }lm}\vec{A}_{l}.
\end{equation}
Such a potential exists, for example, in the electro-vacuum setting \cite{Weinstein}. Here we show that a weaker condition is sufficient.

\begin{lemma}\label{lemma2}
Let $(M,g,k,E,B)$ have the properties described above. If $P_{i}\eta^{i}=0$, then a twist potential exists.
\end{lemma}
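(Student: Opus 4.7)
The plan is to show that the 1-form $\alpha_{i}\,dx^{i} := \epsilon_{ijl}(\pi^{jn}+2\chi^{jn})\eta^{l}\eta_{n}\,dx^{i}$ is closed on $M$, and then invoke the topology of $M$ to conclude it is exact. Since $M\simeq \mathbb{R}^{3}\setminus\sum_{n=0}^{N}i_{n}$ is simply connected (removing finitely many points from $\mathbb{R}^{3}$ leaves $H^{1}=0$), once closedness is established a primitive $\omega$ exists globally.

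Before anything else, I would verify that $\alpha$ really defines a smooth 1-form on all of $M$, despite the Dirac-string singularities of $\vec{A}$ entering through $\chi$. Because $\alpha$ contains two factors of $\eta$ and because $\eta$ vanishes on the $z$-axis, the singular products in $\chi$ are always contracted into the harmless combination $\vec{A}\cdot \eta$ (of the type already handled in Lemma \ref{lemma1}), so $\alpha$ extends smoothly across the axis.

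The main computation is to show $d\alpha=0$, equivalently $\nabla_{[a}\alpha_{i]}=0$. Using the Killing equation $\nabla_{(i}\eta_{j)}=0$ together with the invariance conditions $\mathfrak{L}_{\eta}g = \mathfrak{L}_{\eta}k = \mathfrak{L}_{\eta}E = \mathfrak{L}_{\eta}B = 0$ in \eqref{15}, the derivatives falling on the $\eta$ factors combine antisymmetrically and vanish, while derivatives on $\pi^{jn}$ and $\chi^{jn}$ produce the divergences contracted with $\eta$. After this reduction the closedness of $\alpha$ reduces to the identity
\begin{equation*}
\bigl[\operatorname{div}(\pi) + 2\operatorname{div}(\chi)\bigr]_{i}\eta^{i} = 0 .
\end{equation*}
The momentum constraint \eqref{19} gives $\operatorname{div}(\pi)_{i}\eta^{i} = 8\pi P_{i}\eta^{i} - 2(E\times B)_{i}\eta^{i}$; the first term drops by the hypothesis $P_{i}\eta^{i}=0$. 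For the second term, I would unpack $\chi_{jn}=\epsilon_{imn}E^{i}\epsilon_{j}{}^{lm}\vec{A}_{l}$ using the $\epsilon\epsilon$ identity, and observe that the resulting expression is precisely the quantity whose divergence was computed in \eqref{23}: namely
\begin{equation*}
2\operatorname{div}(\chi)_{i}\eta^{i} = 2(E\times B)_{i}\eta^{i} - 8(\vec{A}\cdot\eta)\operatorname{div} E,
\end{equation*}
so together with $\operatorname{div} E = 0$ from \eqref{10'} the two electromagnetic contributions cancel. This is the very reason the coefficient $2$ appears in the definition of $\omega$: the tensor $\chi$ is engineered to absorb the Maxwell stress contribution to $\operatorname{div}\pi$.

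The main obstacle is keeping the $\chi$-bookkeeping under control and verifying that all the Lie derivative terms cancel cleanly in the calculation of $\nabla_{[a}\alpha_{i]}$; the underlying identity is essentially the axisymmetric version of the argument already carried out in \eqref{23}--\eqref{26}, but it must be executed at the level of the 1-form rather than its integrated flux. Once $d\alpha=0$ is established, simple-connectedness of $M$ yields $\omega$ with $d\omega=\alpha$, completing the proof.
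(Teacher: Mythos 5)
Your proposal follows essentially the same route as the paper: show the 1-form $\epsilon_{ijl}(\pi^{jn}+2\chi^{jn})\eta^{l}\eta_{n}dx^{i}$ is closed by reducing its exterior derivative to the divergence $\nabla^{a}\bigl((\pi_{ab}+2\chi_{ab})\eta^{b}\bigr)$, cancel this using the momentum constraint with $P_{i}\eta^{i}=0$ together with the identity \eqref{23} and $\operatorname{div}E=0$, and then invoke simple connectedness to obtain the potential. The only minor imprecision is that the reduction produces $\nabla^{a}(t_{ab}\eta^{b})$ rather than $(\operatorname{div}t)_{b}\eta^{b}$ — these coincide for the symmetric tensor $\pi$ by the Killing equation, while for the non-symmetric $\chi$ the discrepancy is exactly what the computation \eqref{23} absorbs via $\mathfrak{L}_{\eta}E=0$ — so the argument goes through as you describe.
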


\begin{proof}
For any 2-tensor $t_{ij}$ (not necessarily symmetric), consider the expression
\begin{equation}\label{29.2}
\mathcal{T}_{i}=\epsilon_{ijl}t^{jn}\eta^{l}\eta_{n}.
\end{equation}
A direct calculation shows that
\begin{equation}\label{29.3}
(d\mathcal{T})_{ij}=\nabla_{i}\mathcal{T}_{j}-\nabla_{j}\mathcal{T}_{i}=\nabla^{a}(t_{ab}\eta^{b})\epsilon_{jil}\eta^{l}.
\end{equation}
Thus if we choose
\begin{equation}\label{29.4}
t_{ij}=\pi_{ij}+2\chi_{ij},
\end{equation}
then
\begin{align}\label{29.5}
\begin{split}
(d\mathcal{T})_{ij}&=\nabla^{a}(\pi_{ab}\eta^{b}+2\chi_{ab}\eta^{b})\epsilon_{jil}\eta^{l}\\
&=[(\nabla^{a}\pi_{ab})\eta^{b}+2\nabla^{a}(\chi_{ab}\eta^{b})]\epsilon_{jil}\eta^{l}\\
&=[(\nabla^{a}\pi_{ab})\eta^{b}+2(E\times B)_{b}\eta^{b}-8\vec{A}_{b}\eta^{b}\nabla_{a}E^{a}]\epsilon_{jil}\eta^{l}\\
&=(8\pi P_{b}\eta^{b}-8\vec{A}_{b}\eta^{b}\nabla_{a}E^{a})\epsilon_{jil}\eta^{l},
\end{split}
\end{align}
where we have used that $\pi_{ij}$ is symmetric and $\eta^i$ is a Killing field, as well as formula \eqref{23}. Therefore since $P_{i}\eta^{i}=\operatorname{div}E=0$, $\mathcal{T}$ is
a closed 1-form when $t$ is given by \eqref{29.4}. As $M$ is simply connected, it follows that a twist potential exists.
\end{proof}

The precise statement of conditions under which the inequality \eqref{6} is valid, will now be reviewed. Typically such a result is stated in the
electrovacuum setting, since this guarantees the existence of a twist potential. However with Lemma \ref{lemma2}, we obtain a slight generalization by replacing
the electrovacuum assumption with the dominant energy condition and $P_{i}\eta^{i}=0$. Recall also that the initial
data are said to be maximal if $\operatorname{Tr}k=0$.

\begin{theorem}[\cite{ChruscielCosta,Costa,SchoenZhou}]\label{thm1}
Let $(M,g,k,E,B)$ be a simply connected, axially symmetric, maximal initial data set with two ends, one asymptotically
flat and the other either asymptotically flat or asymptotically cylindrical.
If there is no charged matter, the dominant energy condition is satisfied, and $P_{i}\eta^{i}=0$ (all of which
are satisfied in electrovacuum), then
\begin{equation}\label{30}
m^2\geq\frac{q^{2}+\sqrt{q^{4}+4J^{2}}}{2}.
\end{equation}
Moreover, equality holds if and only if the initial data arise from an extreme Kerr-Newman spacetime.
\end{theorem}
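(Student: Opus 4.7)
The approach I would take follows the now-standard harmonic map method for proving geometric inequalities in axisymmetric general relativity. The basic idea is to bound the ADM mass from below by a reduced energy functional on the orbit space $M/U(1)$, then compare this energy with the corresponding functional evaluated on extreme Kerr-Newman, which is a global minimizer among maps with prescribed asymptotic charges. Since the theorem is attributed to \cite{ChruscielCosta,Costa,SchoenZhou} in the electrovacuum category, the real content to address here is that the hypotheses on the matter fields (dominant energy condition together with $P_i \eta^i=0$ and $\operatorname{div} E=\operatorname{div} B=0$) are strong enough to make the electrovacuum argument go through essentially unchanged.

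First, I would invoke Lemma \ref{lemma2} to obtain a globally defined twist potential $\omega$, which together with the electric and magnetic potentials $\psi,\chi$ (obtained from $E$, $B$ and $\vec{A}$ using axisymmetry and the divergence-free conditions) gives a map $(\omega,\psi,\chi,\lambda) \colon M \setminus \{\text{axis}\} \to \mathbb{H}^2_{\mathbb{C}}$ into the complex hyperbolic plane, where $\lambda$ records the norm of $\eta$. Next, I would pass to Brill coordinates on $M \simeq \mathbb{R}^3 \setminus \sum_{n} i_n$ (simple connectedness and the two-end structure are used here) and write the mass in the classical Brill form. Applying the Hamiltonian constraint with $\operatorname{Tr} k=0$, the mass integrand becomes the scalar curvature $R$ plus $|k|^2$ plus electromagnetic terms; using $R = 16\pi\mu + |k|^2 + 2(|E|^2+|B|^2)$ from \eqref{19} and the dominant energy condition $\mu \geq |P|\geq 0$, the non-electromagnetic matter contribution enters with a favorable sign and can be discarded. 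The remaining expression is, after integration by parts handling the Dirac-string contributions from $\vec A$, bounded below by the harmonic map energy $\mathcal{E}(\omega,\psi,\chi,\lambda)$ on $\mathbb{R}^3$ minus contributions at the punctures which encode the angular momentum and charges.

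The central step is then the harmonic-map inequality of \cite{ChruscielCosta, Costa,SchoenZhou}: with $q^2 = q_e^2 + q_b^2$ and $J$ fixed by the asymptotic data, the energy $\mathcal{E}$ is bounded below by the energy of the extreme Kerr-Newman map, whose associated mass is $m_0^2 = (q^2 + \sqrt{q^4 + 4J^2})/2$. This uses convexity of the Dirichlet energy on the non-positively curved target $\mathbb{H}^2_{\mathbb{C}}$, together with the fact that the extreme Kerr-Newman map is harmonic with the prescribed singular boundary behavior at the axis and punctures. Chaining the two inequalities yields \eqref{30}. For the rigidity statement, equality forces $\mu = 0$ and $|k|$-type terms to vanish in the Brill mass formula, and forces the harmonic map comparison to be sharp, so that $(\omega,\psi,\chi,\lambda)$ must coincide with the extreme Kerr-Newman map up to the allowed gauge. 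Reconstructing $(g,k,E,B)$ from these potentials then shows that the initial data embed isometrically into the extreme Kerr-Newman spacetime.

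The principal obstacle is verifying that the harmonic-map comparison of the cited works does not secretly require vacuum for the matter fields. The resolution is that vacuum is used in those proofs only to (i) produce the twist potential and (ii) drop matter contributions to the Brill mass identity; Lemma \ref{lemma2} supplies (i) under the weaker hypothesis $P_i\eta^i=0$, and the dominant energy condition \eqref{20} (combined with $\operatorname{Tr} k=0$) supplies (ii), since $16\pi\mu = R - |k|^2 - 2(|E|^2 + |B|^2) \geq 0$ shows that the matter density enters with the correct sign throughout the integration-by-parts argument. Once this is observed, the entire chain of inequalities and the rigidity analysis from the electrovacuum proof carries over verbatim.
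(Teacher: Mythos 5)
Your proposal matches the paper's treatment: the paper does not reprove the mass--angular momentum--charge inequality, but simply invokes the harmonic-map results of \cite{ChruscielCosta}, \cite{Costa}, \cite{SchoenZhou} and observes that the electrovacuum hypothesis is only needed there to produce the twist potential and to give the matter terms a favorable sign, which under the present hypotheses are supplied by Lemma \ref{lemma2} and the dominant energy condition with $\operatorname{Tr}k=0$, respectively. Your sketch of the Brill mass/reduced energy argument and the rigidity discussion is a faithful (if more detailed) account of exactly that route, so the proposal is correct and essentially identical in approach.
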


We will also make use of the area-charge-angular momentum inequality. In the case where both ends of $M$ are asymptotically
flat, there exists an axisymmetric stable minimal surface $\Sigma_{min}\subset M$, separating
both ends. Here, stable means that the second variation of area is nonnegative. $\Sigma_{min}$ minimizes area among all
closed 2-surfaces enclosing either end, and thus $A(\Sigma_{min})=A_{min}$,
where $A_{min}$ is the least area required to enclose an end. If one of the ends is cylindrical, the least area $A_{min}$
required to enclose this end is realized either by a stable minimal surface
$\Sigma_{min}\subset M$, or by the surface $\Sigma_{0}$ obtained by taking the asymptotic limit of cross-sections
$\Sigma_{\rho}$ of the end. These observations allow an application of the results in \cite{Reiris} to
obtain the next theorem. For any 2-surface $\Sigma\subset M$, define $q(\Sigma)^{2}=q_{e}(\Sigma)^{2}+q_{b}(\Sigma)^{2}$
where $q_{e}(\Sigma)$ and $q_{b}(\Sigma)$ are defined analogously to \eqref{14} with the
only difference being that the integrals are taken over $\Sigma$. Define also
\begin{equation}\label{31}
q(\Sigma_{0})=\lim_{\rho\rightarrow 0}q(\Sigma_{\rho}), \qquad \widetilde{J}(\Sigma_{0})
=\lim_{\rho\rightarrow 0}\widetilde{J}(\Sigma_{\rho}).
\end{equation}

\begin{theorem}[\cite{Reiris}]\label{thm2}
Let $(M,g,k,E,B)$ be an axially symmetric, maximal initial data set with two ends, one asymptotically flat and the other
either asymptotically flat or asymptotically cylindrical.
If the dominant energy condition is satisfied, then
\begin{equation}\label{32}
A_{min}\geq 4\pi\sqrt{q(\Sigma)^{4}+4\widetilde{J}(\Sigma)^{2}}
\end{equation}
where $\Sigma$ denotes either $\Sigma_{min}$ or $\Sigma_{0}$. Moreover, equality is achieved if and only if $\Sigma=\Sigma_{0}$
is the extreme Kerr-Newman sphere.
\end{theorem}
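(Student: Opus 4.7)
The plan is to combine the stability of the area-minimizer $\Sigma$ with the axisymmetric Einstein--Maxwell constraints and an axially symmetric variational principle on the round 2-sphere. First I would treat both cases uniformly: when both ends are asymptotically flat, $\Sigma_{\min}$ is an axisymmetric stable minimal surface separating the two ends, with axisymmetry ensured by equivariant minimization. When one end is cylindrical, the candidate minimizer is either such a $\Sigma_{\min}$ or the limit cross section $\Sigma_{0}$; the latter is handled by approximating with the axisymmetric cross sections $\Sigma_{\rho}$, which are themselves weakly area-minimizing within their homology class, and passing to the limit $\rho\to 0$. In either case $\Sigma$ inherits a stability-type inequality together with the $U(1)$-symmetry of the ambient data.

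Next I would insert the constant test function $f\equiv 1$ into stability. Combined with the Gauss equation for a minimal surface in $(M,g)$ and the Gauss--Bonnet theorem this yields
\begin{equation*}
\int_{\Sigma}\bigl(|A|^{2}+R\bigr)\,d\sigma\leq 4\pi\chi(\Sigma),
\end{equation*}
and a separate axisymmetric stability argument rules out $\chi(\Sigma)\leq 0$, so that $\Sigma$ is a topological sphere. The Hamiltonian constraint \eqref{19}, combined with $\operatorname{Tr}k=0$ and the dominant energy condition, gives $R\geq |k|^{2}+2(|E|^{2}+|B|^{2})$, so that the integrated squared norms of $A$, of $k|_{\Sigma}$, and of the electromagnetic fields on $\Sigma$ are jointly controlled by a constant multiple of the area.

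Finally I would introduce the axially symmetric potentials. Using Lemma \ref{lemma2}, let $\omega$ be the twist potential and set $\lambda=g(\eta,\eta)|_{\Sigma}$; encode the electromagnetic data via $\vec{A}\cdot\eta$ and an analogous magnetic potential. The quantities $q(\Sigma)^{2}$ and $\widetilde{J}(\Sigma)^{2}$ can then be expressed through a harmonic-map energy for maps from $\Sigma$ into the symmetric space $SU(2,1)/S(U(2)\times U(1))$, the natural target governing axisymmetric Einstein--Maxwell configurations. Pulling back to the round sphere of area $A(\Sigma)$, the pointwise estimate from the previous step bounds this energy by a multiple of $A(\Sigma)$, while a sharp lower bound for the energy at fixed charges delivers $A(\Sigma)\geq 4\pi\sqrt{q(\Sigma)^{4}+4\widetilde{J}(\Sigma)^{2}}$. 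The main obstacle is precisely this sigma-model lower bound on $S^{2}$, developed by Reiris in \cite{Reiris}, whose unique minimizer among axisymmetric data with prescribed charges is the extreme Kerr--Newman horizon cross section. Running equality back through Gauss--Bonnet, the stability inequality, the dominant energy condition, and the sigma-model rigidity---and, in the cylindrical case, additionally matching the asymptotic geometry of the end---yields the characterization of $\Sigma_{0}$ as the extreme Kerr--Newman sphere.
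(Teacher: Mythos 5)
You should first note that the paper does not actually prove this statement: it is quoted from \cite{Reiris}, and the only argument the authors supply is the preliminary reduction, namely that $A_{min}$ is realized either by an axisymmetric stable minimal surface $\Sigma_{min}$ separating the ends or, when one end is cylindrical, possibly by the limit $\Sigma_{0}$ of the cross-sections $\Sigma_{\rho}$, after which the area--angular momentum--charge inequality for stable axisymmetric surfaces from \cite{Reiris} applies verbatim. Your proposal instead attempts to reprove the cited inequality, and at the level of architecture it does match the strategy of \cite{Reiris}: stability of the axisymmetric surface, the maximal constraint with the dominant energy condition, reduction to a functional of sigma-model type with target $SU(2,1)/S(U(2)\times U(1))$, and identification of its minimizer at fixed charge and angular momentum with the extreme Kerr--Newman sphere.

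However, as a proof sketch of the inequality itself it has concrete gaps. First, inserting the constant test function $f\equiv 1$ into stability and concluding that the $L^{2}$ norms of $k$, $E$, $B$ on $\Sigma$ are ``controlled by a constant multiple of the area'' cannot deliver the sharp bound: $q(\Sigma)$ and $\widetilde{J}(\Sigma)$ enter the argument through the boundary values of potentials at the poles of $\Sigma$, not through field energies, and the actual proof uses a non-constant, axisymmetry-adapted test function to obtain an inequality of the schematic form $A\geq 4\pi e^{(\mathcal{M}-8)/8}$, where $\mathcal{M}$ is the relevant reduced energy functional on the sphere; the relation between area and energy is exponential, not linear, so your ``energy bounded by a multiple of $A(\Sigma)$'' step does not connect to the sharp lower bound of the functional at fixed charges. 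Second, your appeal to Lemma \ref{lemma2} is both unjustified and unnecessary here: Theorem \ref{thm2} does not assume $P_{i}\eta^{i}=0$ (only the dominant energy condition and maximality), and the potentials entering the functional are constructed from the data restricted to $\Sigma$ (closed one-forms on the sphere minus its poles), not from a global twist potential on $M$. Finally, the topology step and the treatment of $\Sigma_{0}$ as a limit of the $\Sigma_{\rho}$ are plausible but are precisely the points the paper handles by quoting the stability framework of \cite{Reiris}; as written, your limit argument for the cylindrical case and the rigidity statement in the equality case are asserted rather than proved.
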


We may now state and prove our main result.

\begin{theorem}\label{thm3}
Let $(M,g,k,E,B)$ be a simply connected, axially symmetric, maximal initial data set with two ends, one $(M_{\text{end}}^{1})$
asymptotically flat and the other $(M_{\text{end}}^{2})$ either asymptotically flat or asymptotically cylindrical. If there is
no charged matter, the dominant energy condition is satisfied, and $P_{i}\eta^{i}=0$ (all of which are
satisfied in electrovacuum), then
\begin{equation}\label{33}
\frac{A_{min}}{8\pi}\geq m^2-\frac{q^2}{2}-\sqrt{\left(m^2-\frac{q^2}{2}\right)^2-\frac{q^4}{4}-J^2},
\end{equation}
where $A_{min}$ is the minimum area required to enclose $M_{\text{end}}^{2}$.
Moreover, equality holds if and only if the initial data arise from an extreme Kerr-Newman spacetime.
\end{theorem}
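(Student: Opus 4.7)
The plan is to derive \eqref{33} as a direct algebraic consequence of the two preceding theorems rather than through any fresh geometric argument. The inputs are Theorem \ref{thm1} (which bounds $m^{2}$ below by $(q^{2}+\sqrt{q^{4}+4J^{2}})/2$) and Theorem \ref{thm2} (which bounds $A_{\min}$ below by $4\pi\sqrt{q(\Sigma)^{4}+4\widetilde{J}(\Sigma)^{2}}$); the output is the desired lower bound on $A_{\min}$.

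First, I would check that both theorems apply here and reduce the quantities in Theorem \ref{thm2} to the total charge and angular momentum. Since $\operatorname{div}E=\operatorname{div}B=0$ and there is no charged matter, the divergence theorem applied to the region of $M_{\text{end}}^{1}$ bounded by $\Sigma$ and $S_{\infty}$ gives $q(\Sigma)=q$. The identification $\widetilde{J}(\Sigma)=J$ is the content of Lemma \ref{lemma1} (which uses $P_{i}\eta^{i}=0$) together with \eqref{22}. With these in hand, Theorem \ref{thm2} yields the auxiliary bound $A_{\min}/(8\pi)\geq\tfrac{1}{2}\sqrt{q^{4}+4J^{2}}$.

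Next, writing $\alpha=m^{2}-q^{2}/2$ and $\beta=\sqrt{q^{4}+4J^{2}}$, Theorem \ref{thm1} is precisely $\alpha\geq\beta/2$, so in particular $\alpha^{2}-\beta^{2}/4\geq 0$ and the square root appearing in \eqref{33} is real. The desired bound reads $\alpha-\sqrt{\alpha^{2}-\beta^{2}/4}\leq A_{\min}/(8\pi)$. Given the auxiliary bound from the previous step, it suffices to verify the elementary fact that $\alpha\geq\beta/2$ implies $\alpha-\sqrt{\alpha^{2}-\beta^{2}/4}\leq\beta/2$; a short rearrangement (both sides nonnegative) and squaring reduces this to $\alpha\beta\geq\beta^{2}/2$, which is the very hypothesis.

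For rigidity, equality in \eqref{33} forces equality throughout the chain $\alpha-\sqrt{\alpha^{2}-\beta^{2}/4}\leq\beta/2\leq A_{\min}/(8\pi)$. Equality in the algebraic step forces $\alpha=\beta/2$, i.e., equality in Theorem \ref{thm1}, identifying the data as extreme Kerr-Newman; equality in the second step is the equality case of Theorem \ref{thm2}, pinning the horizon to be the extreme Kerr-Newman sphere. These are mutually consistent and together yield the rigidity statement. The main analytic content is absorbed in Theorems \ref{thm1} and \ref{thm2}; the only genuinely new ingredient is the identification $\widetilde{J}(\Sigma)=J$, which itself rests on the gauge choice \eqref{17.1} and Lemma \ref{lemma1}, so I expect no serious obstacle beyond that bookkeeping.
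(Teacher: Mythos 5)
Your derivation of the inequality itself is essentially the paper's own argument, merely rearranged: the paper proves \eqref{33} via the chain \eqref{34}, namely $m^{2}-\tfrac{q^{2}}{2}\leq\sqrt{\tfrac{q^{4}}{4}+J^{2}}+\sqrt{(m^{2}-\tfrac{q^{2}}{2})^{2}-\tfrac{q^{4}}{4}-J^{2}}\leq\tfrac{A_{min}}{8\pi}+\sqrt{(m^{2}-\tfrac{q^{2}}{2})^{2}-\tfrac{q^{4}}{4}-J^{2}}$, while you isolate $A_{min}$ on one side; both versions rest on the same ingredients, namely $\operatorname{div}E=\operatorname{div}B=0$ and Lemma \ref{lemma1} to identify $q(\Sigma)=q$ and $\widetilde{J}(\Sigma)=J$, Theorem \ref{thm2} for the area bound, and Theorem \ref{thm1} both for the reality of the square root and for the elementary algebraic step. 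That part of your proposal is correct.

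There is, however, a gap in your rigidity argument. With $\alpha=m^{2}-q^{2}/2$ and $\beta=\sqrt{q^{4}+4J^{2}}$, equality in the algebraic step $\alpha-\sqrt{\alpha^{2}-\beta^{2}/4}\leq\beta/2$ does \emph{not} force $\alpha=\beta/2$: squaring gives $\alpha\beta=\beta^{2}/2$, which is also satisfied when $\beta=0$ (i.e.\ $q=J=0$) with $\alpha$ arbitrary, since then both sides vanish identically. So the degenerate case $q=J=0$ must be excluded before you may invoke the equality case of Theorem \ref{thm1}; the paper does exactly this in \eqref{35}, observing that equality would then force $A_{min}=0$, contradicting the existence of two ends. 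In your formulation the same repair is available from the other link of your own chain, since equality there gives $A_{min}/(8\pi)=\beta/2=0$, which is impossible, but as written this case is simply overlooked. A smaller point: the spacetime-level rigidity (``the data arise from extreme Kerr--Newman'') follows from equality in Theorem \ref{thm1}, not from the equality case of Theorem \ref{thm2}, which only identifies the surface $\Sigma_{0}$ as the extreme Kerr--Newman sphere and is not needed for the stated conclusion.
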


\begin{proof}
Apply Lemma \ref{lemma1} and Theorems \ref{thm1} and \ref{thm2} to find
\begin{align}\label{34}
\begin{split}
m^{2}-\frac{q^{2}}{2}&=\sqrt{\left(m^2-\frac{q^2}{2}\right)^2-\frac{q^4}{4}-J^2+\frac{q^{4}}{4}+J^{2}}\\
&\leq\sqrt{\frac{q^{4}}{4}+J^{2}}+\sqrt{\left(m^2-\frac{q^2}{2}\right)^2-\frac{q^4}{4}-J^2}\\
&=\sqrt{\frac{q(\Sigma)^{4}}{4}+\widetilde{J}(\Sigma)^{2}}+\sqrt{\left(m^2-\frac{q^2}{2}\right)^2-\frac{q^4}{4}-J^2}\\
&\leq\frac{A_{min}}{8\pi}+\sqrt{\left(m^2-\frac{q^2}{2}\right)^2-\frac{q^4}{4}-J^2},
\end{split}
\end{align}
where $\Sigma$ is as in Theorem \ref{thm2}.

In the case of equality, we must have either
\begin{equation}\label{35}
\frac{q^{4}}{4}+J^{2}=0\text{ }\text{ }\text{ }\text{ or }\text{ }\text{ }\text{ }\left(m^2-\frac{q^2}{2}\right)^2-\frac{q^4}{4}-J^2=0.
\end{equation}
If the first equality is satisfied, then $q=J=0$. From equality in \eqref{34} we then obtain $A_{min}=0$, which contradicts the
existence of two ends. Thus, the second equality must hold.
This however is equivalent to the case of equality in \eqref{30}. The desired result now follows from Theorem \ref{thm1}.
\end{proof}

\section{Charge and Angular Momentum Separately}
\label{sec3} \setcounter{equation}{0}
\setcounter{section}{3}

It is typically thought that charge and angular momentum behave in a somewhat similar manner with regard to such geometric
inequalities. However in the context of \eqref{33}, their behavior is
quite different when multiple horizons are present. Let us first consider the case of charge alone, that is $q\neq 0$ and
$J=0$. In this case the assumption of simple connectivity, axial symmetry,
maximality, and the existence of a twist potential are not required.

\begin{theorem}\label{thm4}
Let $(M,g,k,E,B)$ be an initial data set without charged matter and satisfying the dominant energy condition.

1) If the initial data are asymptotically flat with one end, and possess a single component boundary consisting of
an outermost apparent horizon with area $A$, then
\begin{equation}\label{36}
\sqrt{\frac{A}{4\pi}}\geq m-\sqrt{m^2-q^{2}}.
\end{equation}

2) If the initial data are maximal with two ends, one $(M_{\text{end}}^{1})$ asymptotically flat and the other
$(M_{\text{end}}^{2})$ either asymptotically flat or asymptotically cylindrical, then
\begin{equation}\label{37}
\sqrt{\frac{A_{min}}{4\pi}}\geq m-\sqrt{m^2-q^{2}},
\end{equation}
where $A_{min}$ is the minimum area required to enclose $M_{\text{end}}^{2}$.
Moreover, equality holds if and only if the initial data arise from an extreme Reissner-Nordstr\"{o}m spacetime.
\end{theorem}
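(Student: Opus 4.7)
Both parts follow the template of the proof of Theorem~\ref{thm3}, simplified by the vanishing of angular momentum. The central observation is the identity
\begin{equation*}
m^{2}-\tfrac{q^{2}}{2}=\sqrt{m^{2}(m^{2}-q^{2})+\tfrac{q^{4}}{4}}\leq m\sqrt{m^{2}-q^{2}}+\tfrac{q^{2}}{2},
\end{equation*}
valid whenever $m\geq|q|$, where the inequality is just subadditivity of the square root applied to the two nonnegative summands. Combining this with a charge-area bound of the form $q^{2}/2\leq A/(8\pi)$ (or $A_{min}/(8\pi)$) and rearranging gives $(m-\sqrt{m^{2}-q^{2}})^{2}\leq A/(4\pi)$, which is the desired inequality upon taking square roots. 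So the entire proof reduces to securing two inputs: a mass-charge inequality $m\geq|q|$ and an area-charge inequality.

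For Part~1 both ingredients are available without any maximality or axial symmetry hypothesis: the mass-charge inequality $m\geq|q|$ follows from the positive mass theorem in the presence of electromagnetic charge (via a Witten-type spinor argument, for example), while the area-charge inequality $A\geq 4\pi q^{2}$ for a single-component outermost apparent horizon follows from the stability inequality for weakly outer trapped surfaces under the dominant energy condition, combined with $\operatorname{div}E=0$. Since there is no charged matter, $\operatorname{div}E=0$ and the divergence theorem identify the horizon charge with the ADM charge, so both inputs feed directly into the algebraic step above.

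For Part~2, Theorem~\ref{thm2} specialized to $J=0$ (hence $\widetilde{J}=0$) gives $A_{min}\geq 4\pi q(\Sigma)^{2}$, and $\operatorname{div}E=0$ yields $q(\Sigma)=q$. Theorem~\ref{thm1} specialized to $J=0$---a case in which axial symmetry is no longer required---furnishes both $m\geq|q|$ and the rigidity that equality characterizes extreme Reissner-Nordstr\"om. Inserting these two bounds into the algebraic step gives \eqref{37}. For rigidity, equality in the subadditivity inequality forces either $q^{4}/4=0$ or $m^{2}(m^{2}-q^{2})=0$; the first case $q=0$ would force $A_{min}=0$, which is incompatible with the existence of two ends, so $m=|q|$, whence the rigidity of the mass-charge inequality identifies the data as arising from an extreme Reissner-Nordstr\"om slice.

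The main point requiring care---rather than any essential new difficulty---is verifying that the two input inequalities are available in the literature under exactly the hypotheses of each part: the area-charge inequality for an outermost apparent horizon without any maximality assumption in Part~1, and the rigidity of the mass-charge inequality in the maximal, possibly cylindrical, two-ended setting in Part~2. Once these are in hand the proof is a two-line algebraic manipulation.
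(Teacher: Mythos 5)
Your argument is correct and takes essentially the same route as the paper: part (2) is exactly the proof of Theorem \ref{thm3} specialized to $J=0$ (your subadditivity identity is just \eqref{34} with $J=0$), with the positive mass theorem with charge and the area-charge inequality supplying the two inputs and the equality case handled identically. For part (1) the paper simply cites \cite{KhuriWeinsteinYamada}, whose proof is the same combination of the charged positive mass theorem and the area-charge inequality for stable outermost apparent horizons that you sketch, so there is no substantive difference in approach.
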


\begin{proof}
Statement (1) is proven in \cite[Corollary 2]{KhuriWeinsteinYamada}. The inequality in statement (2) is equivalent to
\eqref{33} in Theorem \ref{thm3} when $J=0$, and may be established in the same way, since the positive mass theorem
with charge \cite{CRT}, \cite{GHHP} as well as the area-charge inequality \cite{Reiris}, \cite{DJR} are valid under the
current hypotheses.
\end{proof}

Ideally one would like to show that \eqref{36} holds when $A$ is replaced by the minimum area required to enclose the
outermost apparent horizon. This is of course a stronger result, however the relevant area-charge inequality
needed to establish it is currently not available. Moreover the case of equality in \eqref{36} should also imply that
the initial data arise from the extreme Reissner-Nordstr\"{o}m spacetime. However the relevant case of equality
for the positive mass theorem with charge, needed to establish this result, is also currently not available, although
a proposal for its proof has been put forth in \cite{KhuriWeinstein}.

Consider now the case of angular momentum alone, that is $q=0$ and $J\neq 0$. The situation for a single black hole is
very similar to that of charge alone. For instance, as a corollary of Theorem \ref{thm3} we have the following
statement.

\begin{cor}\label{cor1}
Let $(M,g,k)$ be a simply connected, axially symmetric, maximal initial data set with two ends, one $(M_{\text{end}}^{1})$
asymptotically flat and the other $(M_{\text{end}}^{2})$ either asymptotically flat or asymptotically cylindrical. If the
dominant energy condition is satisfied, and $P_{i}\eta^{i}=0$ (all of which are satisfied in vacuum),
then
\begin{equation}\label{38}
\frac{A_{min}}{8\pi}\geq m^2-\sqrt{m^4-J^2},
\end{equation}
where $A_{min}$ is the minimum area required to enclose $M_{\text{end}}^{2}$.
Moreover, equality holds if and only if the initial data arise from an extreme Kerr spacetime.
\end{cor}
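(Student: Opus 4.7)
The plan is to derive this corollary as a direct specialization of Theorem \ref{thm3} upon adjoining vanishing electromagnetic fields to the initial data. First I would extend the given data $(M,g,k)$ to an Einstein-Maxwell initial data set $(M,g,k,E,B)$ by setting $E\equiv 0$ and $B\equiv 0$. Then one checks that the no-charged-matter conditions \eqref{10'}, the axial symmetry requirements \eqref{15} on $E$ and $B$, and the asymptotic fall-off \eqref{12} are all trivially satisfied, and that the total charges $q_e$ and $q_b$ defined in \eqref{14} both vanish, so that $q=0$.

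Next I would observe that all of the remaining hypotheses of Theorem \ref{thm3} — simple connectivity, axisymmetry, maximality, the two-end structure, the dominant energy condition, and $P_i\eta^i=0$ — carry over verbatim from the hypotheses of the corollary. When $E=B\equiv 0$, the quantities $\mu$ and $P$ in \eqref{19} reduce to their vacuum forms, which justifies the parenthetical vacuum remark in the statement. Substituting $q=0$ into \eqref{33} then yields
\[
\frac{A_{min}}{8\pi}\geq m^2-\sqrt{m^4-J^2},
\]
which is exactly \eqref{38}.

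For the rigidity statement, I would again invoke Theorem \ref{thm3}: equality in \eqref{33} forces the initial data to arise from an extreme Kerr-Newman spacetime, and setting $q_e=q_b=0$ collapses that family to the Kerr family, leaving extreme Kerr as the unique possibility, with the extreme relation $m^2=|J|$ coming from equality in \eqref{30}. The hard part, if there were one, would lie entirely in Theorem \ref{thm3} itself; the corollary is just the $q=0$ case of that theorem, and the only verification required is that zeroing the electromagnetic fields preserves every hypothesis, which is immediate.
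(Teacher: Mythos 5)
Your proposal matches the paper exactly: the paper offers no separate argument for Corollary \ref{cor1}, presenting it simply as the $q=0$ specialization of Theorem \ref{thm3} obtained by adjoining $E=B\equiv 0$, which is precisely what you do, including the rigidity step where equality forces extreme Kerr--Newman with vanishing charge, hence extreme Kerr. Your verification that zeroing the electromagnetic fields preserves all hypotheses is correct and complete.
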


When multiple black holes are present, similarity between the charged case and the angular momentum case break down. To see this,
recall that the Majumdar-Papapetrou spacetime --- the static extremal black hole solution to
the electrovacuum equations --- violates the area-charge inequality whenever two or more black holes are present. Based on this
observation, Weinstein and Yamada were able to perturb Majumdar-Papapetrou initial data to
find the following counterexample to the lower bound \eqref{37}.

\begin{theorem}[\cite{WeinsteinYamada}]\label{thm5}
There exists a time symmetric $(k=0)$, asymptotically flat initial data set $(M,g)$ for the Einstein-Maxwell
system, having outermost minimal surface boundary (with two components) and such that
\begin{equation}\label{39}
\frac{A}{4\pi}< m-\sqrt{m^2-q^{2}},
\end{equation}
where $A$ is the area of the boundary.
\end{theorem}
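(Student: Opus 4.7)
The plan is to construct the counterexample by perturbing two-body Majumdar--Papapetrou (MP) initial data. The key observation is that MP already ``violates'' \eqref{39} in the limiting sense, with the two cylindrical asymptotic ends playing the role of the inner boundary, and that a small perturbation should convert these cylindrical ends into genuine compact minimal-surface boundaries while preserving the violation. Concretely, the MP data with two equal extremal black holes at $p_{1},p_{2}\in\mathbb{R}^{3}$ have spatial metric $g_{\mathrm{MP}}=U^{2}\delta$ with $U=1+m_{0}/|x-p_{1}|+m_{0}/|x-p_{2}|$, $k=0$, $B=0$, and the standard MP electric field. A direct calculation gives $m=q=2m_{0}$ and asymptotic cross-sectional area $4\pi m_{0}^{2}$ at each cylindrical end, so that assigning $A=8\pi m_{0}^{2}$ to the pair of throats yields $\sqrt{A/(4\pi)}=\sqrt{2}\,m_{0}<2m_{0}=m-\sqrt{m^{2}-q^{2}}$---a strict, scale-invariant violation of (the square-rooted form of) \eqref{39} with a definite margin.

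Next, I would regularize MP into an honest asymptotically flat data set with a compact inner boundary. Excising small coordinate balls $B_{\epsilon}(p_{i})$ and setting $M_{\epsilon}=\mathbb{R}^{3}\setminus(B_{\epsilon}(p_{1})\cup B_{\epsilon}(p_{2}))$, the time-symmetric Einstein--Maxwell constraint (no charged matter, $B=0$) reduces, via the conformal ansatz $\tilde{g}=\phi^{4}\delta$ and $\tilde{E}=\phi^{-6}\hat{E}$ with $\operatorname{div}_{\delta}\hat{E}=0$, to a semilinear Lichnerowicz-type equation
\[
\Delta\phi=-\tfrac{1}{4}\phi^{-3}|\hat{E}|^{2}\quad\text{on }M_{\epsilon}.
\]
I would solve this PDE with asymptotic flatness at infinity and, on each boundary sphere $\partial B_{\epsilon}(p_{i})$, a Robin-type condition that forces the sphere to be minimal in $\tilde{g}$. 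The MP factor $\phi_{\mathrm{MP}}=\sqrt{U}$ solves the bulk PDE but not this boundary condition at finite $\epsilon$; for small $\epsilon>0$ one obtains a genuine solution $\phi_{\epsilon}$ either by sub/super-solution barriers built from scaled extreme Reissner--Nordstr\"om profiles near each puncture, or by an implicit function perturbation around $\phi_{\mathrm{MP}}$.

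Finally, because $\phi_{\epsilon}\to\phi_{\mathrm{MP}}$ on compact subsets of $\mathbb{R}^{3}\setminus\{p_{1},p_{2}\}$ as $\epsilon\to 0$, the ADM mass and total charge converge to $2m_{0}$ and the total boundary area converges to $8\pi m_{0}^{2}$; the strict margin from the MP computation is therefore preserved, and \eqref{39} is violated by the perturbed data for all sufficiently small $\epsilon$. To close the argument one must also verify that $\partial B_{\epsilon}(p_{1})\cup\partial B_{\epsilon}(p_{2})$ together form an \emph{outermost} minimal surface in $\tilde{g}$, which I would deduce from a direct mean-curvature computation on coordinate spheres in the MP geometry (strict mean-convexity outside each throat) combined with stability of this mean-convexity under the small perturbation. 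The main technical obstacle will be this middle step: the singular behavior of $\phi_{\mathrm{MP}}$ at the punctures and the near-extremal character of MP make the existence theory and uniform-in-$\epsilon$ control delicate, while a secondary subtlety is ruling out hidden interior minimal surfaces enclosing both boundary components, which would replace $A$ by a strictly larger effective quantity and could in principle restore the inequality.
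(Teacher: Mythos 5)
The paper itself offers no proof of this statement: it is imported wholesale from the cited work of Weinstein and Yamada, and the only indication given is that the example is obtained by perturbing two--body Majumdar--Papapetrou data. Your strategy is therefore the right one in spirit, and your MP margin computation ($m=q=2m_{0}$, throat areas $4\pi m_{0}^{2}$ each, so $\sqrt{A/4\pi}=\sqrt{2}\,m_{0}<2m_{0}=m-\sqrt{m^{2}-q^{2}}$, i.e.\ the negation of the square-rooted inequality \eqref{37}; note \eqref{39} as printed omits the square root) is correct. But as a proof the sketch has two genuine gaps. First, the passage to the limit is not justified by what you invoke: convergence $\phi_{\epsilon}\to\phi_{\mathrm{MP}}$ on compact subsets of $\mathbb{R}^{3}\setminus\{p_{1},p_{2}\}$ gives no control on the boundary spheres $\partial B_{\epsilon}(p_{i})$, which collapse onto the punctures where $\phi_{\mathrm{MP}}$ blows up, nor on the $1/r$ coefficient at infinity that determines $m_{\epsilon}$. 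Both the boundary area and the ADM mass must be controlled by uniform-in-$\epsilon$ two-sided barriers valid up to the boundary and by uniform decay estimates at infinity; this is precisely where the near-extremal degeneration of MP makes the analysis delicate, and ``the strict margin is preserved'' does not follow from interior compact convergence alone.

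Second, and more seriously, the outermost property is not a secondary subtlety but the heart of the matter. A mean-curvature computation on coordinate spheres only shows that those particular spheres are not minimal; it cannot exclude a connected minimal surface enclosing both throats, which need not be a coordinate sphere. Such enclosing horizons genuinely occur in two-body time-symmetric data of this type when the bodies are sufficiently close (the Brill--Lindquist phenomenon), and if the outermost minimal surface had such a component then $A$ would be its (possibly much larger) area and the claimed violation could disappear. One must therefore choose the configuration appropriately (e.g.\ take the separation of $p_{1},p_{2}$ large relative to $m_{0}$, which does not affect the margin) and give a bona fide argument that the outermost minimal surface consists exactly of two components, one near each puncture, with areas close to $4\pi m_{0}^{2}$; this is the main technical content of the cited construction. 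A practical simplification: you can avoid the Lichnerowicz boundary-value problem entirely by using the explicit electrovacuum family $g=(\chi\psi)^{2}\delta$, $E_{i}=\partial_{i}\ln(\chi/\psi)$, with $\chi=1+\sum a_{i}/r_{i}$ and $\psi=1+\sum b_{i}/r_{i}$ flat-harmonic (so $m=\sum(a_{i}+b_{i})$, $q=\sum(a_{i}-b_{i})$, $\operatorname{div}E=0$, and the constraints hold exactly); this reduces to Majumdar--Papapetrou at $b_{i}=0$, while for small $b_{i}>0$ each puncture is an asymptotically flat end hiding a minimal sphere of area close to $4\pi a_{i}^{2}$, so that only the outermost-surface analysis above remains.
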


On the other hand, consider the case of multiple black holes with angular momentum alone. Let us label the areas of the stable
minimal surfaces and angular momentums of each black hole by $A_{i}$ and $J_{i}$, $i=1,\ldots,I$
respectively. Under the assumption of maximal axisymmetric initial data, satisfying the dominant energy condition, the
area-angular momentum inequality \cite{Reiris} for single black holes implies that
\begin{equation}\label{40}
A=\sum_{i=1}^{I}A_{i}\geq \sum_{i=1}^{I}8\pi|J_{i}|\geq 8\pi\left|\sum_{i=1}^{I}J_{i}\right|=8\pi|J|.
\end{equation}
Thus, the area-angular momentum inequality holds for multiple black holes, since the left-hand side is additive in $A$ and
subadditive in $J$.  This leads to the following conjecture.

\begin{conjecture}\label{conj1}
Let $(M,g,k)$ be a simply connected, axially symmetric, maximal initial data set with multiple ends, one $(M_{\text{end}}^{1})$
asymptotically flat and the others $(M_{\text{end}}^{i})$, $i=2,\ldots,I$ either asymptotically flat or asymptotically cylindrical.
If the dominant energy condition is satisfied, and $P_{i}\eta^{i}=0$ (all of which are satisfied in vacuum), then
\begin{equation}\label{41}
\frac{A}{8\pi}\geq m^2-\sqrt{m^4-J^2},
\end{equation}
where $J$ is the sum of the angular momentums, and $A$ is the sum of the areas of the stable minimal surfaces enclosing each end
$M_{\text{end}}^{i}$, $i=2,\ldots,I$.
Moreover, equality holds if and only if the initial data arise from an extreme Kerr spacetime.
\end{conjecture}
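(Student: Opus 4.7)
The plan is to mirror the proof of Theorem \ref{thm3} in the specialized setting $q=0$, but applied to the full multi-end configuration. Two ingredients are required: a mass--angular momentum inequality $m^{2}\geq |J|$ for simply connected, axisymmetric, maximal initial data with several ends satisfying the dominant energy condition and $P_{i}\eta^{i}=0$; and the additive area--angular momentum inequality $A\geq 8\pi|J|$ recorded in \eqref{40}. With both in place, one simply repeats the algebraic step of \eqref{34}:
\begin{equation*}
m^{2}=\sqrt{m^{4}-J^{2}+J^{2}}\leq|J|+\sqrt{m^{4}-J^{2}}\leq\frac{A}{8\pi}+\sqrt{m^{4}-J^{2}},
\end{equation*}
which rearranges to the desired inequality \eqref{41}. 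The second ingredient is essentially free: applying the single-component area--angular momentum inequality of \cite{Reiris} to each stable minimal surface and then invoking the triangle inequality on the angular momenta produces it directly, exactly as in \eqref{40}.

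The substantive obstacle, and the reason this remains a conjecture rather than a theorem, is the multi-end mass--angular momentum inequality. I would attack it by extending the harmonic-map/renormalized-energy approach underlying Theorem \ref{thm1}: Lemma \ref{lemma2} still supplies the twist potential $\omega$ since it requires only $P_{i}\eta^{i}=\operatorname{div}E=0$, so the pair $(\log|\eta|^{2},\omega)$ defines a map from the axisymmetric quotient into the hyperbolic plane whose renormalized Brill energy bounds $m$ from below. The task is then to show that the infimum of this energy, taken over boundary data compatible with prescribed angular momenta $J_{2},\dots,J_{I}$ at the punctures $i_{1},\dots,i_{I-1}$, dominates $|\sum_{i}J_{i}|$. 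In the single-puncture case this infimum is realized uniquely by extreme Kerr; in the multi-puncture case the candidate minimizer is a multi-soliton configuration along the axis, and the main difficulties are establishing its existence, ruling out conical defects on axis segments between punctures, and proving the energy lower bound. Partial results exist when the $J_{i}$ share a common sign or when one perturbs off a known minimizer, but the general case appears to demand genuinely new tools.

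For the rigidity clause, suppose equality holds in \eqref{41}. Tracing back through the displayed chain, equality forces both $|J|=A/(8\pi)$ and $m^{2}=|J|+\sqrt{m^{4}-J^{2}}$, the latter being equivalent to $m^{2}=|J|$. The first condition saturates \eqref{40}, which in turn requires $A_{i}=8\pi|J_{i}|$ for each inner end together with $|\sum_{i}J_{i}|=\sum_{i}|J_{i}|$; the rigidity half of Theorem \ref{thm2} then identifies each corresponding surface as an extreme Kerr sphere, while the second equality forces a common sign on the $J_{i}$. A strengthened, multi-end version of Theorem \ref{thm1} (which would accompany the extended Dain inequality) should then identify the global minimizer as the extreme Kerr slice, which carries exactly one asymptotically cylindrical end; consequently $I=2$ and the data must arise from an extreme Kerr spacetime.
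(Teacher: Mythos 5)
Your proposal follows exactly the paper's own (conditional) reasoning: the statement is left as a conjecture precisely because the multi-end mass--angular momentum inequality \eqref{42} is open, and granting it, the paper derives \eqref{41} from the subadditive area--angular momentum bound \eqref{40} by the same algebraic chain \eqref{43} that you reproduce. Your additional remarks on how one might attack \eqref{42} and the rigidity case go beyond what the paper attempts, but the core argument is the same and you correctly identify the single missing ingredient.
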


If the positive mass theorem with angular momentum for multiple black holes were known to be valid
\begin{equation}\label{42}
m^2\geq\left|\sum_{i=1}^{I}J_{i}\right|,
\end{equation}
then we could establish \eqref{41} with the help of \eqref{40}
\begin{align}\label{43}
\begin{split}
m^2=\sqrt{m^4-J^2 + J^2 }&\leq |J| + \sqrt{m^4-J^2}\\
&\leq \frac{A}{8\pi}+\sqrt{m^4-J^2}.
\end{split}
\end{align}
Furthermore, there is strong physical evidence in support of \eqref{42}. Namely, the same heuristic arguments presented
in Section \ref{sec1}, and based on cosmic censorship, may be used to
derive \eqref{42}. It then appears likely that Conjecture \ref{conj1} is true. Hence we find distinctly different behavior
with regard to the lower bound \eqref{33} in the case of multiple
black holes, as counterexamples exist when $q\neq 0$, $J=0$ and counterexamples should not exist when $q=0$, $J\neq 0$.
The key reason for this difference seems to be that the area-angular momentum inequality is subadditive in $J$, whereas
the area-charge inequality does not have this property for $q$. Moreover, there do not exist analogues of the Majumdar-Papapetrou
solutions in the case of angular momentum alone, and somehow
the inequality \eqref{33} seems to know this fact.


\begin{thebibliography}{99}

\bibitem{Bray} H. Bray, \textit{Proof of the Riemannian Penrose inequality using the positive mass
theorem}, J. Differential Geom., \textbf{59} (2001), 177-267.

\bibitem{Choquet-Bruhat} Y. Choquet-Bruhat, \textit{General Relativity and the Einstein Equations}, Oxford University Press, 2009.

\bibitem{Christodoulou} D. Christodoulou, \textit{Reversible and irreversible transforations in black-
hole physics}, Phys. Rev. Lett., \textbf{25} (1970), 1596-1597.

\bibitem{Chrusciel} P. Chrusciel, \textit{Mass and angular-momentum inequalities for axi-symmetric initial data sets.
I. Positivity of Mass}, Ann. Phys., \textbf{323} (2008), 2566-2590.

\bibitem{ChruscielCosta0} P. Chrusciel, and J. Costa, \textit{On uniqueness of station-
ary vacuum black holes,} In Proceedings of G\'{e}om\'{e}trie diffrentielle,
Physique math\'{e}matique, Mathematiques et soci\'{e}t\'{e}, Ast\'{e}risque,
\textbf{321} (2008), 195-265. arXiv:0806.0016

\bibitem{ChruscielCosta} P. Chrusciel, and J. Costa, \textit{Mass, angular-momentum and charge inequalities for axisymmetric initial data},
Classical Quantum Gravity, \textbf{26} (2009), no. 23, 235013.

\bibitem{ChruscielGallowayPollack} P. Chrusciel, G. Galloway, and D. Pollack, \textit{Mathematical General Relativity: a sampler},
Bull. Amer. Math. Soc. (N.S.), \textbf{47} (2010), no. 4, 567-638. arXiv:1004.1016v2

\bibitem{CRT} P. Chrusciel, H. Reall, and P. Tod, \textit{On Israel-Wilson-Perjes black holes}, Classical Quantum Gravity,
\textbf{23} (2006), 2519-2540.

\bibitem{Reiris} M. Clement, J. Jaramillo, and M. Reiris, \textit{Proof of the area-angular momentum-charge inequality for
axisymmetric black holes}, preprint (2012), arXiv:1207.6761.

\bibitem{Costa} J. Costa, \textit{Proof of a Dain inequality with charge}, J. Phys. A, \textbf{43} (2010), no. 28, 285202.

\bibitem{Dain0} S. Dain, \textit{Proof of the angular momentum-mass inequality for axisymmetric black hole}, J. Differential Geom.,
\textbf{79} (2008), 33-67.

\bibitem{Dain} S. Dain, \textit{Geometric inequalities for axially symmetric black holes}, Classical Quantum Gravity, \textbf{29}
(2012), no. 7, 073001.

\bibitem{DJR} S. Dain, J. Jaramillo, and M. Reiris, \textit{Area-charge inequality for black holes}, Classical Quantum Gravity,
\textbf{29} (2012), no. 3, 035013.

\bibitem{DisconziKhuri} M. Disconzi, and M. Khuri, \textit{On the Penrose inequality for charged black holes}, Classical Quantum Gravity,
\textbf{29} (2012), 245019, arXiv:1207.5484.

\bibitem{GHHP} G. Gibbons, S. Hawking, G. Horowitz, and M. Perry, \textit{Positive mass theorem for black holes}, Commun. Math. Phys.,
\textbf{88} (1983), 295-308.

\bibitem{Hawking} S. Hawking, and G. Ellis, \textit{The Large Structure of Space-Time}, Cambridge Monographs on Mathematical
Physics, Cambridge University Press, 1973.

\bibitem{HuiskenIlmanen} G. Huisken, and T. Ilmanen, \textit{The inverse mean curvature flow and the Riemannian
Penrose inequality}, J. Differential Geom., \textbf{59} (2001), 353-437.

\bibitem{Jang} P.-S. Jang, \textit{Note on cosmic censorship}, Phys. Rev. D, \textbf{20} (1979), no. 4, 834-–838.

\bibitem{KhuriWeinstein} M. Khuri, and G. Weinstein, \textit{Rigidity in the positive mass theorem with charge}, preprint, 2013.

\bibitem{KhuriWeinsteinYamada} M. Khuri, G. Weinstein, and S. Yamada, \textit{On the Riemannian Penrose inequality with
charge and the cosmic censorhip conjecture}, RIMS K\^{o}ky\^{u}roku, Res. Inst. Math. Sci. (RIMS), Kyoto, to appear. arXiv:1306.0206

\bibitem{Penrose} R. Penrose, \textit{Naked singularities}, Ann. New York Acad. Sci., \textbf{224} (1973), 125-134.

\bibitem{Penrose1} R. Penrose, \textit{Some unsolved problems in classical general relativity}, Seminar on Differential Geometry,
Ann. Math. Study, \textbf{102} (1982), 631-668.

\bibitem{SchoenZhou} R. Schoen, and X. Zhou, \textit{Convexity of reduced energy and mass angular momentum inequalities}, preprint
(2012), arXiv:1209.0019.

\bibitem{Weinstein} G. Weinstein, \textit{N-black hole stationary and axially symmetric solutions of the Einstein/Maxwell equations},
Comm. Partial Differential Equations, \textbf{21} (1996), no. 9-10, 1389-–1430.

\bibitem{WeinsteinYamada} G. Weinstein, and S. Yamada, \textit{On a Penrose inequality with charge}, Comm. Math. Phys., \textbf{257}
(2005), no. 3, 703-–723.

\end{thebibliography}
\end{document}